\numberwithin{equation}{section}
\declaretheoremstyle[bodyfont=\it,qed=\qedsymbol]{noproofstyle}
\declaretheorem[name=Observation,numbered=no]{observation*}
\declaretheorem[numberlike=equation]{theorem}
\declaretheorem[name=Theorem,numbered=no]{theorem*}
\declaretheorem[numberlike=equation]{lemma}
\declaretheorem[name=Lemma,numbered=no]{lemma*}
\declaretheorem[numberlike=equation]{corollary}
\declaretheorem[name=Corollary,numbered=no]{corollary*}
\declaretheorem[name=Proposition,numbered=no]{proposition*}
\declaretheorem[name=Claim,numbered=no]{claim*}
\declaretheorem[name=Conjecture,numbered=no]{conjecture*}
\declaretheorem[name=Question,numbered=no]{question*}
\declaretheoremstyle[bodyfont=\it,qed=$\lozenge$]{defstyle} 
\declaretheorem[numberlike=equation,style=defstyle]{definition}
\declaretheorem[unnumbered,name=Definition,style=defstyle]{definition*}
\declaretheorem[unnumbered,name=Example,style=defstyle]{example*}
\declaretheorem[unnumbered,name=Notation=defstyle]{notation*}
\declaretheorem[unnumbered,name=Construction,style=defstyle]{construction*}
\declaretheorem[numberlike=equation,style=defstyle]{remark}
\declaretheorem[unnumbered,name=Remark,style=defstyle]{remark*}
\newcommand{\shortECCC}[2]{\texttt{\href{http://eccc.hpi-web.de/report/\ifnumcomp{#1}{>}{93}{19}{20}#1/#2/}{eccc:TR#1-#2}}}
\newcommand{\parseECCC}[1]{
\StrSubstitute{#1}{TR}{}[\tmpstring]%
\IfSubStr{\tmpstring}{/}{ 
\StrBefore{\tmpstring}{/}[\ecccyear]%
\StrBehind{\tmpstring}{/}[\ecccreport]%
}{
\StrBefore{\tmpstring}{-}[\ecccyear]%
\StrBehind{\tmpstring}{-}[\ecccreport]%
}%
\shortECCC{\ecccyear}{\ecccreport}}
\newcommand{\F}{\mathbb{F}}
\newcommand{\N}{\mathbb{N}}
\newcommand{\Q}{\mathbb{Q}}
\renewcommand{\C}{\mathbb{C}}
\newcommand{\va}{\mathbf{a}}
\newcommand{\vu}{\mathbf{u}}
\newcommand{\vv}{\mathbf{v}}
\newcommand{\vx}{\mathbf{x}}
\newcommand{\vy}{\mathbf{y}}
\newcommand{\vs}{\mathbf{s}}
\newcommand{\vw}{\mathbf{w}}
\newcommand{\valpha}{\boldsymbol{\alpha}}
\newcommand{\vbeta}{\boldsymbol{\beta}}
\newcommand{\vgamma}{\boldsymbol{\gamma}}
\newcommand{\SV}{\mathrm{SV}}
\newcommand{\rank}{\mathsf{rank}}
\newcommand{\Ideal}{\mathbf{I}}
\newcommand{\Variety}{\mathbf{V}}
\newcommand{\PIT}{\mathsf{PIT}}
\title{A Polynomial Degree Bound on Equations for Non-rigid Matrices and  Small Linear Circuits}
\author{Mrinal Kumar\thanks{Department of Computer Science \& Engineering, IIT Bombay. Email:\texttt{ mrinal@cse.iitb.ac.in}}
\and
 Ben Lee Volk \thanks{Center for the Mathematics of Information, California Institute of Technology, USA. Email:\texttt{ benleevolk@gmail.com} }}
\date{}
\begin{document}

\maketitle

\abstract{
We show that there is an equation of degree at most $\poly(n)$ for the (Zariski closure of the) set of the non-rigid matrices: that is, we show that for every large enough field $\F$, there is a non-zero $n^2$-variate polynomial $P \in \F[x_{1, 1}, \ldots, x_{n, n}]$ of degree at most $\poly(n)$ such that  every matrix $M$ which can be written as a sum of a matrix of rank at most $n/100$ and a matrix of sparsity at most $n^2/100$ satisfies $P(M) = 0$. This  confirms a conjecture of Gesmundo, Hauenstein, Ikenmeyer and Landsberg \cite{ GHIL16} and improves the best upper bound known for this problem down from $\exp(n^2)$ \cite{KLPS14, GHIL16} to $\poly(n)$. 

We also show a similar polynomial degree bound for the (Zariski closure of the) set of all matrices $M$ such that the linear transformation represented by $M$ can be computed by an algebraic circuit with at most  $n^2/200$ edges (without any restriction on the depth). As far as we are aware, no such bound was known prior to this work when the depth of the circuits is unbounded. 

Our methods are elementary and short and rely on a polynomial map of Shpilka and Volkovich \cite{SV15} to construct low degree ``universal'' maps for non-rigid matrices and small linear circuits. Combining this construction with a simple dimension counting argument to show that any such polynomial map has a low degree annihilating polynomial completes the proof. 

As a corollary, we show that any derandomization of the polynomial identity testing problem will imply new circuit lower bounds. A similar (but incomparable) theorem was proved by Kabanets and Impagliazzo \cite{KI04}.  
}

\section{Introduction}

\subsection{Equations for varities in algebraic complexity theory}
Let $V \subseteq \F^n$ be a (not necessarily irreducible) affine variety and let $\Ideal(V)$ denote its ideal.\footnote{For completeness, we provide the formal (standard) definitions for these notions in \autoref{sec:ag}.}. A non-zero polynomial $P \in \Ideal(V)$ is called an \emph{equation} for $V$. An equation for $V$ may serve as a ``proof'' that a point $\vx \in \F^n$ is \emph{not} in $V$, by showing that $P(\vx) \neq 0$.

A fundamental observation of the Geometric Complexity Theory program is that many important circuit lower bounds problems in algebraic complexity theory fit naturally into the setting of showing that a point $\vx$ lies outside a variety $V$ \cite{MulmuleyS01, BIPLS19}. In this formulation, one considers $V$ to be the closure of a class of polynomials of low complexity, and $\vx$ is the coefficient vector of the candidate hard polynomial.

Let $\Delta(V) := \min_{0 \neq P \in \Ideal(V)} \{ \deg(P) \}$. The quantity $\Delta(V)$ can be thought of as a measure of complexity for the geometry of the variety $V$. The quantity $\Delta(V)$ is a very coarse complexity measure. A recent line of work regarding \emph{algebraic natural proofs} \cite{FSV18, GKSS17} suggests to study the arithmetic circuit complexity of equations for varieties $V$ that correspond to polynomials with small circuit complexity. Having $\Delta(V)$ growing like a polynomial in $n$ is a necessary (but not a sufficient) condition for a variety $V$ to have an algebraic natural proof for non-containment.

\subsection{Rigid matrices}

A matrix $M$ is $(r,s)$-rigid if $M$ cannot be written as a sum $R+S$ where $\rank(R) \le r$ and $S$ contains at most $s$ non-zero entries. Valiant \cite{Valiant77} proved that if $A$ is $(\varepsilon n, n^{1+\delta})$-rigid for some constants $\varepsilon, \delta>0$ then $A$ cannot be computed by arithmetic circuits of size $O(n)$ and depth $O(\log n)$, and posed the problem of  \emph{explicitly} constructing rigid matrices with these parameters, which is still open. It is easy to prove that most matrices have much stronger rigidity parameters: over algebraically closed fields a generic matrix is $(r, (n-r)^2)$-rigid for any target rank $r$.

Let $\F$ be an algebraically closed field. Let $A_{r,s} \subseteq \F^{n \times n}$ denote the set of matrices which are not $(r, s)$-rigid. Let $V_{r,s} = \overline{A_{r,s}}$ denote the Zariski closure of $A_{r,s}$. A geometric study of $V_{r,s}$ was initiated by Kumar, Lokam, Patankar and Sarma \cite{KLPS14}. Among other results, they prove that for every $s < (n-r)^2$, $\Delta(V_{r,s}) \le n^{4n^2}$. A slightly improved (but still exponential) upper bound was obtained by Gesmundo, Hauenstein, Ikenmeyer and Landsberg \cite{GHIL16}, who also conjectured that for some $\varepsilon, \delta>0$, $\Delta(V_{\varepsilon n, n^{1+\delta}})$ grows like a polynomial function in $n$. The following theorem which we prove in this note confirms this conjecture. 
\begin{theorem}
\label{thm:poly-deg-bound}
Let $\varepsilon < 1/25$, and let $\F$ be a field of size at least $n^2$. For every large enough $n$, there exists a non-zero polynomial $Q \in \F[x_{1,1}, \ldots, x_{n,n}]$, of degree at most $n^3$, which is a non-trivial equation for matrices which are \emph{not} ($\varepsilon n$, $\varepsilon n^2$)-rigid. That is, for every such matrix $M$, $Q(M) = 0$.
\end{theorem}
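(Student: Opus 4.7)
The plan is to build a low-degree polynomial map $\Phi : \F^m \to \F^{n \times n}$ with $m < n^2$ whose image contains every non-$(\varepsilon n, \varepsilon n^2)$-rigid matrix, and then extract an equation for this image by a dimension count on the space of low-degree polynomials in $n^2$ variables.

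Constructing $\Phi$: every non-rigid matrix decomposes as $M = UV^T + S$ with $U, V \in \F^{n \times r}$ and $S$ being $s$-sparse, for $r = \varepsilon n$ and $s = \varepsilon n^2$. The rank part $UV^T$ is literally a bilinear map in $2nr = 2\varepsilon n^2$ variables whose entries have degree $2$. The sparse part is the combinatorial obstacle, since the set of $s$-sparse matrices is a union of $\binom{n^2}{s}$ coordinate subspaces. The key tool is the Shpilka--Volkovich generator: fix distinct $a_1, \ldots, a_{n^2} \in \F$ (available since $|\F| \ge n^2$) and the Lagrange interpolants $L_i(y) = \prod_{k \neq i} (y - a_k)/(a_i - a_k)$. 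Then the map
\[
G(y_1, z_1, \ldots, y_s, z_s)_i \;=\; \sum_{j=1}^{s} z_j \, L_i(y_j)
\]
from $\F^{2s}$ to $\F^{n^2}$ has coordinate degree at most $n^2$, and its image contains every $s$-sparse vector: given a target with support $\{i_1,\ldots,i_t\}$, $t \le s$, set $y_j = a_{i_j}$ and $z_j$ to the corresponding entry (padding with $z_j=0$). Summing the two pieces yields $\Phi$ with $m = 4\varepsilon n^2$ input variables and each output coordinate of degree $d \le n^2$, whose image contains every non-rigid matrix.

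Dimension count: consider the linear map $Q \mapsto Q \circ \Phi$ sending degree-$\le D$ polynomials in $n^2$ variables to degree-$\le dD$ polynomials in $m$ variables. The domain has dimension $\binom{n^2 + D}{n^2}$ while the codomain has dimension $\binom{m + dD}{m}$. For $D = n^3$ I would use $\binom{a+b}{b} \le (e(a+b)/b)^b$ to bound the codomain dimension by $(e \cdot dD / m)^m = n^{O(\varepsilon n^2)}$, whereas the domain dimension is at least $(D/n^2)^{n^2} = n^{n^2}$. For $\varepsilon < 1/25$ and large $n$ the first is strictly larger, so the kernel is non-trivial; any $Q$ in the kernel is a non-zero polynomial of degree $\le n^3$ that vanishes identically on $\mathrm{image}(\Phi)$, hence on every non-rigid matrix.

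The only genuinely non-routine step is converting the combinatorial sparsity constraint into a single low-degree polynomial parametrization, which is exactly what the Shpilka--Volkovich generator accomplishes via Lagrange interpolation. Everything afterwards is a quantitative inequality between two binomial coefficients; the main bookkeeping to be done is to verify carefully that $m = 4\varepsilon n^2$, $d = n^2$, and $D = n^3$ make the comparison go through under the hypothesis $\varepsilon < 1/25$.
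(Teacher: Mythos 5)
Your proposal is correct and follows essentially the same route as the paper: the Shpilka--Volkovich generator parametrizes the sparse part, a rank-$\varepsilon n$ bilinear map handles the low-rank part, and a dimension count on the spaces of polynomials of bounded degree yields a nonzero annihilator. The only cosmetic difference is the specific inequality used to upper-bound the codomain's binomial coefficient (you use $\binom{a+b}{b} \le (e(a+b)/b)^b$, the paper uses $\binom{a+b}{b} \le (2\max(a,b))^{\min(a,b)}$), but both yield $n^{O(\varepsilon n^2)}$ and the same threshold for $\varepsilon$.
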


 In fact, the conjecture of \cite{GHIL16} was slightly weaker: they conjectured that $\Delta(U)$ is polynomial in $n$ for every irreducible component $U$ of $V_{\varepsilon n, n^{1+\delta}}$. As shown by \cite{KLPS14}, the irreducible components are in one-to-one correspondence with subsets of $[n]\times [n]$ of size $n^{1+\delta}$ corresponding to possible supports of the sparse matrix $S$.

As we observe in \autoref{rmk: fixed sparsity pattern}, it is somewhat simpler to show that each of these irreducible components has an equation with a polynomial degree bound. However, since the number of such irreducible components is exponentially large, it is not clear if there is a single  equation for the whole variety which is of polynomially bounded degree. We do manage to reverse the order of quantifiers and prove such an upper bound in \autoref{thm:poly-deg-bound}. This suggests that the set of non-rigid matrices is much less complex than what one may suspect given the results of \cite{KLPS14, GHIL16}.

\subsection{Circuits for linear transformations}

The original motivation for defining rigidity was in the context of proving lower bounds for algebraic circuits \cite{Valiant77}. If $A \in \F^{n \times n}$ is an $(\varepsilon n, n^{1+\delta})$-rigid matrix, for any $\varepsilon, \delta >0$, then the linear transformation represented by $A$ cannot be computed by an algebraic circuit of depth $O(\log n)$ and size $O(n)$.

Every algebraic circuit computing a linear transformation is without loss of generality a \emph{linear} circuit. A linear circuit is a directed acyclic graph that has $n$ inputs labeled $X_1, \ldots, X_n$ and $n$ output nodes. Each edge is labeled by a scalar $\alpha \in \F$. Each node computes a linear function in $X_1, \ldots, X_n$ defined inductively. An internal node $u$ with children, $v_1, \ldots, v_k$, connected to it by edges labeled $\alpha_1, \ldots, \alpha_k$, computes the linear function $\sum_i \alpha_i \ell_{v_i}$, where $\ell_{v_i}$ is the linear function computed by $v_i$, $1 \le i\le k$. The size of the circuit is the number of edges in the circuit.

It is possible to use similar techniques to those used in the proof of \autoref{thm:poly-deg-bound} in order to prove a polynomial upper bound on an equation for a variety containing all matrices $A \in \F^{n \times n}$ whose corresponding linear transformation can be computed by an algebraic circuit of size at most $n^2/200$ (even without restriction on the depth). Note that this is nearly optimal as any such linear transformation can be computed by a circuit of size $n^2$. More formally, we show the following.  
\begin{theorem}
\label{thm:deg-bound-lin-ckt}
Let $\F$ be a field of size at least $n^2$. For every large enough $n$, there exists a non-zero polynomial $Q \in \F[x_{1,1}, \ldots, x_{n,n}]$, of degree at most $n^3$, which is a non-trivial equation for matrices which are computed by algebraic circuit of size at most $n^2/200$.
\end{theorem}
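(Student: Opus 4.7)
The strategy follows the template of \autoref{thm:poly-deg-bound}: we construct a polynomial map $\Phi \colon \F^m \to \F^{n^2}$ whose image contains every matrix computed by a linear circuit of size at most $s := n^2/200$, and then appeal to dimension counting to produce a non-zero polynomial $Q$ of degree at most $n^3$ in $n^2$ variables that vanishes on the image of $\Phi$. We may assume that the circuit has fan-in $2$, since binarizing a gate of in-degree $d$ introduces at most $d-1$ new fan-in-$2$ gates and so the total number of internal gates remains bounded by $s$. Labeling the internal gates $1, \ldots, s$ in topological order and setting $T := n + s$ to be the total number of nodes, the matrix $M$ computed by such a circuit admits the factorization
\[
M \;=\; P \cdot E_s \cdot E_{s-1} \cdots E_1 \cdot J,
\]
where $J \in \F^{T \times n}$ has $I_n$ in its top $n$ rows and zeros in the rest (embedding the inputs into slots $1, \ldots, n$), $P \in \F^{n \times T}$ is the $0/1$ matrix whose $i$-th row is $e_{o_i}^\top$ (selecting the output node for the $i$-th output), and each $E_k \in \F^{T \times T}$ is an ``elementary'' matrix agreeing with the identity in every row except row $n+k$, which equals $\alpha_k e_{a_k}^\top + \beta_k e_{b_k}^\top$ (so that $E_k$ writes $\alpha_k V_{a_k} + \beta_k V_{b_k}$ into slot $n+k$ while leaving the other slots unchanged).

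To turn this description into a polynomial map with few parameters, we use the Shpilka--Volkovich generator to absorb the combinatorial choices of the positions $a_k, b_k, o_i$ into continuous parameters. Fix $T$ distinct field elements $\gamma_1, \ldots, \gamma_T \in \F$ (which exist since $|\F| \ge n^2 \ge T$) with associated Lagrange interpolation polynomials $L_1, \ldots, L_T$, each of degree $T - 1$. For each gate $k$ we introduce variables $(\alpha_k, \beta_k, u_k, v_k) \in \F^4$ and define the $j$-th entry of row $n+k$ of $E_k$ to be $\alpha_k L_j(u_k) + \beta_k L_j(v_k)$; for each output $i$ we introduce a single variable $p_i \in \F$ and set $P[i, j] := L_j(p_i)$. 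Specializing $u_k = \gamma_{a_k}$, $v_k = \gamma_{b_k}$, $p_i = \gamma_{o_i}$ recovers any specific circuit, so the image of the resulting polynomial map $\Phi$ contains every matrix computable by a circuit of size at most $s$. A direct computation shows that $\Phi$ depends on $m = 4s + n \le n^2/40$ variables (for $n$ large enough) and that every coordinate of $\Phi$ is a polynomial of degree at most $(s+1)\,T \le 4s^2 = O(n^4)$.

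The conclusion now follows by the standard dimension counting argument: the linear map $Q \mapsto Q \circ \Phi$ sends polynomials of degree at most $D$ in $n^2$ variables to polynomials of degree at most $D \cdot d$ in $m$ variables, and thus has a non-trivial kernel whenever $\binom{n^2 + D}{n^2} > \binom{Dd + m}{m}$. Plugging in $D = n^3$, $d = O(n^4)$, and $m \le n^2/40$, the left hand side is at least $n^{n^2}$ while the right hand side is at most $(e n^5)^m \le n^{n^2/5}$, so the inequality holds with room to spare and produces a non-zero $Q$ of degree at most $n^3$ that vanishes on every matrix computed by a circuit of size $\le n^2/200$. The main obstacle in the argument is designing a parameterization that is simultaneously low-dimensional (so that dimension counting applies) and of low degree (so that the resulting equation has polynomial degree); the elementary-matrix factorization is what keeps the dependence of $M$ on the gate weights linear (rather than quadratic) in the circuit size, while the Shpilka--Volkovich map efficiently absorbs the exponentially many possible circuit topologies into only $O(s)$ additional parameters.
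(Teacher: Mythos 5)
Your proof is correct, and the argument is genuinely different from the paper's. Both constructions aim to build a universal polynomial map on few variables and low degree whose image contains every matrix computable by a size-$s$ linear circuit, and both use the Shpilka--Volkovich / Lagrange interpolation trick to absorb the exponentially many combinatorial choices (positions of children, output nodes) into continuous parameters. Where you diverge is in the structure of the universal map: the paper builds a single universal layered graph $G$ with $s$ layers of $s$ vertices each, with $s' \le s^4$ edges, and labels \emph{all} edges simultaneously with a single copy of $\SV_{s',s}(\vx,\vy)$ on $2s$ variables, yielding a map of degree about $s'(s+1) \approx s^5 = n^{10}$. You instead exploit the factorization $M = P E_s \cdots E_1 J$ into "elementary" matrices, one per gate, with each $E_k$ parameterized by its own tiny copy of the SV generator (effectively $\SV_{T,2}$ on $4$ parameters per gate). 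This yields a map on $4s+n$ variables of degree only $(s+1)T = O(s^2) = O(n^4)$. Both parameter counts and degrees are small enough that the identical dimension-count closes the argument; your construction uses roughly twice as many variables as the paper's but has a markedly lower degree, and the elementary-matrix viewpoint is arguably tidier. The reduction to fan-in 2 and the bound $\sum_i(d_i - 1) \le s$ on the number of binary gates after binarization is handled correctly, as is the observation that one needs only $T = n+s \le n^2 \le |\F|$ distinct interpolation points.
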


Our proofs are based on a dimension counting arguments, and are therefore non-constructive and do not give explicit equations for the relevant varieties. It thus remains a very interesting open problem to provide explicit low-degree equations for any of the varieties considered in this paper. Here ``explicit'' means a polynomial which has arithmetic circuits of size $\poly(n)$.\footnote{Although one may consider other, informal notions of explicitness which could nevertheless be helpful.} The question of whether such equations exists has a win-win flavor: if they do, this can aid in explicit constructions of rigid matrices, and on the other hand, if all equations are hard, we have identified a family of polynomials which requires super-polynomial arithmetic circuits. Assuming the existence of a polynomial time algorithm for polynomial identity testing, we are able to make this connection formal.

Let $\PIT$ denote the set of strings which describe arithmetic circuits (say, over $\C$) which compute the zero polynomial. It is well known that $\PIT \in \coRP$. Kabanets and Impagliazzo \cite{KI04} proved that certain circuit lower bounds follow from the assumption that $\PIT \in \P$. As a corollary to \autoref{thm:deg-bound-lin-ckt}, we are able to prove theorem of a similar kind.

\begin{corollary}\label{cor:win-win}
Suppose $\PIT \in \P$. Then at least one of the following is true:
\begin{enumerate}
\item There exists a family of $n$-variate polynomials of degree $\poly(n)$ over $\C$, which can be computed (as its list of coefficients, given the input $1^n$) in $\PSPACE$, which does not have polynomial size constant free arithmetic circuits.
\item there exists a family of matrices, constructible in polynomial time with an $\NP$ oracle (given the input $1^n$), which requires linear circuits of size $\Omega(n^2)$.
\end{enumerate}
\end{corollary}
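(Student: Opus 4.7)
The plan is to adapt the Kabanets--Impagliazzo style win-win dichotomy to the polynomial map $\varphi_n$ underlying \autoref{thm:deg-bound-lin-ckt}. The proof of that theorem (as sketched in the abstract) produces a polynomial map $\varphi_n \colon \C^{k(n)} \to \C^{n^2}$ with $k(n) = \poly(n)$, itself given by an arithmetic circuit of size and degree $\poly(n)$, whose image is Zariski-dense in the set of $n \times n$ matrices implementable by a linear circuit of size at most $n^2/200$; the theorem then guarantees a non-zero polynomial $Q$ of degree at most $n^3$ satisfying $Q \circ \varphi_n \equiv 0$.

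First, I would fix a canonical equation $Q_n$ as the lexicographically smallest non-zero integer coefficient vector (with unit content, for definiteness) in the kernel of the linear map $Q \mapsto Q \circ \varphi_n$, restricted to polynomials of degree at most $n^3$. Any given coefficient of $Q_n$ can be extracted from a $\PSPACE$ computation on this linear system, so the family $\{Q_n\}$ is $\PSPACE$-computable coefficient-by-coefficient and has polynomially bounded degree. If $Q_n$ has no polynomial-size constant-free arithmetic circuit, conclusion~(1) already holds. So assume otherwise; then there exists at least one polynomial-size constant-free circuit $C$ with $C \not\equiv 0$ and $C \circ \varphi_n \equiv 0$, for instance any minimum-size constant-free circuit computing $Q_n$ itself.

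Next I would construct the matrix family $M_n$ in $\P^{\NP}$ in two stages. In Stage~A, use the $\NP$ oracle to perform a lexicographic search for \emph{any} constant-free circuit $C$ of size at most $\poly(n)$ satisfying $C \not\equiv 0$ and $C \circ \varphi_n \equiv 0$; both conditions are $\PIT$ instances of size and degree $\poly(n)$ and hence in deterministic polynomial time by assumption, so the $\NP$ oracle only has to guess the remaining bits of $C$'s description given a fixed prefix, and existence is guaranteed by the previous paragraph. In Stage~B, with $C$ in hand, use $\PIT$ to locate a matrix $M_n \in \{0, 1, \dots, n^3+1\}^{n \times n}$ with $C(M_n) \neq 0$ via the standard bit-by-bit hitting procedure: assign each coordinate in turn to the smallest value in this range for which the resulting partial restriction of $C$ is still non-zero (certified by $\PIT$). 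Stage~B runs in deterministic polynomial time. Because $C$ vanishes on the entire Zariski closure of the image of $\varphi_n$, which contains every matrix realizable by a linear circuit of size at most $n^2/200$, the inequality $C(M_n) \neq 0$ forces $M_n$ to require linear circuits of size $\Omega(n^2)$, yielding conclusion~(2).

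The chief obstacle I anticipate is tracking the fine-grained complexity of $\varphi_n$ as extracted from the Shpilka--Volkovich construction of \autoref{thm:deg-bound-lin-ckt}: one needs either that $\varphi_n$ is itself polynomial-size and constant-free, or at least that any rational constants it introduces have bounded bit-complexity and can be absorbed so that $C \circ \varphi_n$ remains a polynomial-size arithmetic circuit on which the hypothesized deterministic polynomial-time $\PIT$ algorithm can be applied. Once this bookkeeping is in place, the rest of the argument is routine.
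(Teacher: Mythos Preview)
Your overall strategy matches the paper's: compute a canonical equation $Q_n$ in $\PSPACE$, take the dichotomy on whether $Q_n$ has small constant-free circuits, and in the ``small'' case use the $\NP$ oracle plus $\PIT$ to locate a circuit $C$ and then a non-zero $M_n$. However, there is a genuine gap in Stage~B, and it is precisely the technical point the paper singles out as the main difficulty.

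In Stage~A you search for \emph{any} constant-free circuit $C$ of size $\poly(n)$ with $C\not\equiv 0$ and $C\circ\varphi_n\equiv 0$. Under $\PIT\in\P$, both of these are indeed decidable in polynomial time in the size of $C$, so Stage~A works. The problem is that a size-$\poly(n)$ circuit can compute a polynomial of degree $2^{\poly(n)}$, and nothing in your search prevents the $\NP$ oracle (or the lexicographic-first witness) from handing you such a $C$. In Stage~B your bit-by-bit search over the grid $\{0,\dots,n^3+1\}^{n\times n}$ is only guaranteed to find a non-zero if $\deg C \le n^3+1$; for a high-degree $C$ there may simply be no non-zero in that grid, and enlarging the grid to $\{0,\dots,\deg C\}$ is not polynomial time. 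You cannot patch this by adding ``$\deg C \le n^3$'' to the verifier's check, because it is not known how to decide the degree of an arithmetic circuit in deterministic polynomial time even assuming $\PIT\in\P$.

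The paper's fix is to restrict the search to \emph{almost multiplicatively disjoint} circuits: a syntactic, polynomial-time checkable condition under which every multiplication gate either has disjoint sub-circuits or a constant-producing child, so that the degree is bounded by the size. A lemma (essentially Malod--Portier) shows that if $Q_n$ has a $\poly(n)$-size constant-free circuit, it also has a $\poly(n)$-size constant-free almost-MD circuit, so the restricted search still succeeds. With this restriction in place, Stage~B goes through exactly as you wrote it. Your ``chief obstacle'' paragraph about the bit-complexity of constants in $\varphi_n$ is a real bookkeeping issue, but it is not the crux; the degree control on $C$ is.
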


A \emph{constant free arithmetic circuit} is an arithmetic circuit which is only allowed to use the constants $\{0,\pm 1\}$.

A different way to interpret \autoref{cor:win-win} is as saying that at least one of the following three lower bound results hold: either $\PIT \not \in \P$, or (at least) one of the two circuit lower bounds stated in the corollary. We emphasize that the result holds under \emph{any} (even so-called \emph{white box}) derandomization of $\PIT$.

Our statement is similar to, but incomparable with the result of Kabanets and Impagliazzo \cite{KI04} who proved that if $\PIT \in \P$ then either the permanent does not have polynomial size constant free arithmetic circuits, or $\NEXP \not\subseteq \P/\poly$.

Since $(\varepsilon n, \varepsilon n^2)$-rigid matrices have linear circuit of size $3\varepsilon n^2$, the last item of \autoref{cor:win-win} in particular implies a conditional construction of $(\Omega(n), \Omega(n^2))$-rigid matrices (it is also possible to directly use \autoref{thm:poly-deg-bound} instead of \autoref{thm:deg-bound-lin-ckt} to deduce this result). Unconditional constructions of rigid matrices in polynomial time with an $\NP$ oracle were recently given in \cite{AC19, BHPT20}. However, the rigidity parameters in these papers are not enough to imply circuit lower bounds (furthermore, even optimal rigidity parameters are not enough to imply $\Omega(n^2)$ lower bounds for general linear circuits).

Since it is widely believed that $\PIT \in \P$, the answer to which of the last two items of \autoref{cor:win-win} holds boils down to the question of whether there exists an equation for non-rigid matrices of degree $\poly(n)$ and circuit size $\poly(n)$. If determining if a matrix is rigid is $\coNP$-hard (as is known for some restricted ranges of parameters \cite{MS10}), it is tempting to also believe that the equations should not be easily computable, as they provide ``proof'' for rigidity which can be verified in randomized polynomial time. However, it could still be the case that those equations that have polynomial size circuits only prove the rigidity of ``easy'' instances.

\subsection{Some basic notions in algebraic geometry}
\label{sec:ag}

For completeness, in this section we define some basic notions in algebraic geometry. A reader who is familiar with this topic may skip to the next section.

Let $\F$ be an algebraically closed field. A set $V \subseteq \F^n$ is called an \emph{affine variety} if there exist polynomials $f_1, \ldots, f_t \in \F[x_1, \ldots, x_n]$ such that $V = \{\vx : f_1(\vx) = f_2(\vx) = \cdots = f_t(\vx) = 0\}$. For convenience, in this paper we often refer to affine varieties simply as varieties.

For each variety $V$ there is a corresponding ideal $\Ideal(V) \subseteq \F[x_1, \ldots, x_n]$ which is defined as
\[
\Ideal(V) := \{ f \in \F[x_1, \ldots, x_n] : f(\vx) = 0 \text{ for all } \vx \in V \}.
\]
Conversely, for an ideal $I \subseteq \F[x_1, \ldots, x_n]$ we may define the variety
\[
\Variety(I) = \{\vx : f(\vx) = 0  \text{ for all } f \in I \}.
\]

Given a set $A \subseteq \F^n$ we may similarly define the ideal $\Ideal(A)$. The (Zariski) \emph{closure} of a set $A$, denoted $\overline{A}$, is the set $\Variety(\Ideal(A))$. In words, the closure of $A$ is the set of common zeros of all the polynomials that vanish on $A$. It is also the smallest variety with respect to inclusion which contains $A$. By construction, $\overline{A}$ is a variety, and a polynomial which vanishes everywhere on $A$ is also vanishes on $\overline{A}$.

Over $\C$, it is instructive to think of the Zariski closure of $A$ as the usual Euclidean closure. In fact, for the various sets $A$ we consider in this paper (which correspond to sets of ``low complexity'' objects, e.g., non-rigid matrices or matrices which can be computed with a small circuit), it can be shown that these two notions of closure coincide (see, e.g., Section 4.2 of \cite{BI17}).

A variety $V$ is called \emph{irreducible} if it cannot be written as a union $V = V_1 \cup V_2$ of varieties $V_1, V_2$ that are properly contained in $V$. Every variety can be uniquely written as a union $V = V_1 \cup V_2 \cup \cdots \cup V_m$ of irreducible varieties. The varieties $V_1, \ldots, V_m$ are then called the \emph{irreducible components} of $V$.

\section{Degree Upper Bound for Non-Rigid Matrices}
\label{sec:rigid}

In this section, we prove \autoref{thm:poly-deg-bound}. A key component of the proof is the use of the following construction, due to Shpilka and Volkovich, which provides an explicit low-degree polynomial map on a small number of variables, which contains all sparse matrices in its image. For completeness, we provide the construction and prove its basic property.

\begin{lemma}[\cite{SV15}]
\label{lem:sv-generator}
Let $\F$ be a field such that $|\F|>n$. Then for all $k\in \N$, there exists an explicit polynomial map $\SV_{n,k} (\vx,\vy) : \F^{2k} \to \F^n$ of degree at most $n$ such that for any subset $T = \{i_1, \ldots, i_k\} \subseteq [n]$ of size $k$, there exists a setting $\vy=\valpha$ such that $\SV(\vx,\valpha)$ is identically zero on every coordinate $j \not\in T$, and equals $x_j$ in coordinate $i_j$ for all $j \in [k]$.
\end{lemma}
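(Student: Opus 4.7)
The plan is to construct $\SV_{n,k}$ via a direct Lagrange interpolation, which is the standard Shpilka--Volkovich generator. Since $|\F| > n$, I can first pick $n$ distinct field elements $a_1, \ldots, a_n \in \F$. For each $j \in [n]$, define the Lagrange interpolation polynomial
$$L_j(y) \;=\; \prod_{i \in [n] \setminus \{j\}} \frac{y - a_i}{a_j - a_i},$$
which is a univariate polynomial of degree $n-1$ satisfying $L_j(a_i) = 1$ if $i = j$ and $L_j(a_i) = 0$ otherwise.

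Next, writing $\vx = (x_1, \ldots, x_k)$ and $\vy = (y_1, \ldots, y_k)$, I would define $\SV_{n,k}(\vx,\vy) : \F^{2k} \to \F^n$ coordinate-wise by
$$\SV_{n,k}(\vx, \vy)_j \;=\; \sum_{\ell = 1}^{k} x_\ell \cdot L_j(y_\ell) \qquad \text{for } j \in [n].$$
Each coordinate is linear in $\vx$ and of degree $n-1$ in $\vy$, so the total degree of the map is at most $n$, as required.

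To verify the image property, given a target subset $T = \{i_1, \ldots, i_k\} \subseteq [n]$, I would set $\valpha = (a_{i_1}, \ldots, a_{i_k}) \in \F^k$. For any $j \notin T$ and every $\ell \in [k]$, the index $i_\ell$ differs from $j$, so $L_j(a_{i_\ell}) = 0$; thus $\SV_{n,k}(\vx, \valpha)_j = 0$ identically. On the other hand, if $j = i_r$ for some $r \in [k]$, then $L_{i_r}(a_{i_\ell}) = \delta_{\ell,r}$, and the sum collapses to $x_r$. This gives exactly the two properties in the statement.

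There is essentially no obstacle here: the construction is explicit, and both the degree bound and the ``hitting'' property fall out immediately from the defining identities of the Lagrange basis. The only place where the hypothesis $|\F| > n$ is used is in the initial choice of $n$ pairwise distinct elements $a_1, \ldots, a_n$.
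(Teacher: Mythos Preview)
Your proposal is correct and is essentially identical to the paper's own proof: both pick $n$ distinct field elements, form the corresponding Lagrange basis polynomials, and define the $j$-th coordinate of $\SV_{n,k}$ as $\sum_{\ell=1}^k x_\ell L_j(y_\ell)$, then set $y_\ell = a_{i_\ell}$ to hit the target support $T$. The only differences are notational.
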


\begin{proof}
Arbitrarily pick distinct $\alpha_1, \ldots \alpha_n \in \F$, and let $u_1, \ldots, u_n$ be their corresponding Lagrange's interpolation polynomials, i.e., polynomials of degree at most $n-1$ such that $u_i(\alpha_j) = 1$ if $j=i$ and $0$ otherwise (more explicitly, $u_i(z) = \frac{\prod_{j \neq i} (z-\alpha_j)}{\prod_{j \neq i}(\alpha_i - \alpha_j)}$).

Let $P_i (x_1, \ldots, x_k, y_1, \ldots, y_k) = \sum_{j=1}^k u_i (y_j) \cdot x_j$, and finally let
\[
\SV_{n,k} (\vx, \vy) = (P_1(\vx, \vy), \ldots, P_n(\vx, \vy)).
\]
It readily follows that given $T=\{i_1, \ldots, i_k\}$ as in the statement of the lemma, we can set $y_j = \alpha_{i_j}$ for $j \in [k]$ to derive the desired conclusion. The upper bound on the degree follows by inspection.
\end{proof}

As a step toward the proof of \autoref{thm:poly-deg-bound}, we show there is a polynomial map on much fewer than $n^2$ variables with degree polynomially bounded in $n$ such that its image contains every non-rigid matrix. In the next step, we show that the image of every such polynomial map has an equation of degree $\poly(n)$. 
\begin{lemma}
\label{lem:map-for-rigid}
There exists an explicit polynomial map $P : \F^{4 \varepsilon n^2} \to \F^{n \times n}$, of degree at most $n^2$, such that every matrix $M$ which is not $(\varepsilon n, \varepsilon n^2)$ rigid lies in its image.
\end{lemma}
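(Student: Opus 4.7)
The plan is to combine two separate polynomial parameterizations: one for the low-rank part and one for the sparse part of a non-rigid matrix.

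First, recall that every matrix $M$ which is not $(\varepsilon n, \varepsilon n^2)$-rigid can be written as $M = R + S$ where $\rank(R) \le \varepsilon n$ and $S$ has at most $\varepsilon n^2$ non-zero entries. I would parameterize each piece independently and then sum the two maps.

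For the low-rank part, any $n \times n$ matrix of rank at most $\varepsilon n$ can be written as $UV^{T}$ where $U,V$ are $n \times \varepsilon n$ matrices. Using the entries of $U$ and $V$ as variables gives a polynomial map $Q_1 : \F^{2\varepsilon n^2} \to \F^{n \times n}$ whose entries are bilinear (hence of total degree $2$) in the $2\varepsilon n^2$ input variables, and whose image is exactly the set of matrices of rank at most $\varepsilon n$.

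For the sparse part, I would invoke \autoref{lem:sv-generator} with ambient dimension $n^2$ and sparsity parameter $k = \varepsilon n^2$. This gives an explicit polynomial map $\SV_{n^2,\varepsilon n^2}(\vx,\vy) : \F^{2\varepsilon n^2} \to \F^{n^2}$ of degree at most $n^2$, whose image contains every vector with support of size at most $\varepsilon n^2$. Reshaping the output coordinates as entries of an $n \times n$ matrix yields a polynomial map $Q_2 : \F^{2\varepsilon n^2} \to \F^{n \times n}$ of degree at most $n^2$ whose image contains every $n \times n$ matrix with at most $\varepsilon n^2$ non-zero entries.

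Finally, define $P : \F^{4\varepsilon n^2} \to \F^{n \times n}$ by $P(\vu,\vv,\vx,\vy) = Q_1(\vu,\vv) + Q_2(\vx,\vy)$. The total number of input variables is $2\varepsilon n^2 + 2\varepsilon n^2 = 4\varepsilon n^2$, and the degree is bounded by $\max(2,n^2) = n^2$. Given any non-rigid $M = R + S$, we can choose inputs so that $Q_1$ outputs $R$ and $Q_2$ outputs $S$ simultaneously, so $M$ lies in the image of $P$. There is no real obstacle here; both parameterizations are standard, and the main point is simply that the Shpilka--Volkovich construction lets us handle the sparse part with only $2\varepsilon n^2$ variables rather than the naive $\Theta(\varepsilon n^2 \log n)$ one would need from a direct encoding of a sparse vector.
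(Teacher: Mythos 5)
Your proposal is correct and follows essentially the same approach as the paper: decompose $M = R+S$, parameterize the rank-$\varepsilon n$ part via a product of two $n \times \varepsilon n$ symbolic matrices (degree $2$, $2\varepsilon n^2$ variables), parameterize the sparse part via the Shpilka--Volkovich map $\SV_{n^2,\varepsilon n^2}$ (degree at most $n^2$, $2\varepsilon n^2$ variables), and sum the two maps. The only difference from the paper is cosmetic ($UV^T$ with $V \in \F^{n \times \varepsilon n}$ versus $UV$ with $V \in \F^{\varepsilon n \times n}$), and your closing observation about why $\SV$ is the right tool here matches the paper's motivation exactly.
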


\begin{proof}
Let $k = \varepsilon n^2$ and let $\vu, \vv, \vx, \vy$ denote disjoint tuples of $k$ variables each. 

Let $U$ be a symbolic $n \times \varepsilon n$ matrix whose entries are labeled by the variables $\vu$, and similarly let $V$ be a symbolic $\varepsilon n \times n$ matrix labeled by $\vv$. Let $\mathrm{UV} (\vu, \vv) : \F^{2k} \to \F^{n \times n}$ be the degree 2 polynomial map defined by the matrix multiplication $UV$.

Finally, let $P: \F^{4k} \to \F^{n \times n}$ be defined as
\[
P(\vu,\vv,\vx,\vy) = \mathrm{UV}(\vu,\vv) + \SV_{n^2,k} (\vx,\vy),
\]
where $\SV_{n^2,k}$ is as defined in Lemma \ref{lem:sv-generator}.

Suppose now $M$ is a non-rigid matrix, i.e., $M = R+S$ for $R$ of rank $\varepsilon n$ and $S$ which is $\varepsilon n^2$-sparse. Decompose $R = U_0 V_0$ for $n \times \varepsilon n$ matrix $U_0$ and $\varepsilon n \times n$ matrix $V_0$. Let $T$ denote the support of $S$. For convenience we may assume $|T|=k$ (otherwise, pad with zeros arbitrarily). Let $\valpha \in \F^k$ denote the setting for $\vy$ in $\mathrm{SV}_{n^2,k}$ which maps $x_1, \ldots, x_k$ to $T$, and let $\vs=(s_1, \ldots, s_k)$ denote the non-zero entries of $S$. Then
\[
P(U_0, V_0, \vs, \valpha) = U_0 V_0 + S = R+S = M. \qedhere
\]
\end{proof}

To complete the proof of \autoref{thm:poly-deg-bound}, we now argue that the image of any polynomial map with parameters as in \autoref{lem:map-for-rigid} has an equation of degree at most $n^3$. 

\begin{proof}[Proof of Theorem~\ref{thm:poly-deg-bound}]
Let $V_1$ denote the subspace of polynomials over $\F$ in $n^2$ variables of degree at most $n^3$. Let $V_2$ denote the subspace of polynomials over $\F$ in $4\varepsilon n^2$ variables of degree at most $n^5$. Let $P$ be as in Lemma~\ref{lem:map-for-rigid}, and consider the linear transformation $T : V_1 \to V_2$ given by $Q \mapsto Q \circ P$, where $Q \circ P$ denotes the composition of the polynomial $Q$ with the map $P$, i.e., $(Q \circ P) (\vx) = Q(P(\vx))$ (indeed, observe that since $\deg (Q) \le n^3$ and $\deg(P) \le n^2$, it follows that $\deg Q \circ P \le n^5$).

We have that $\dim(V_1) =  \binom{n^3 + n^2}{n^2} \ge n^{n^2}$, whereas $\dim(V_2) = \binom{4\varepsilon n^2 + n^5}{4\varepsilon n^2} \le (2n^5)^{4 \varepsilon n^2} < \dim(V_1)$  by the choice of $\varepsilon$, so that there exists a non-zero polynomial in the kernel of $T$, that is, $ 0 \neq Q_0 \in V_1$ such that $Q_0 \circ P \equiv 0$.

It remains to be shown that for any non-rigid matrix $M$, $Q_0(M) = 0$. Indeed, let $M$ be a non-rigid matrix. By Lemma \ref{lem:map-for-rigid}, there exist $\vbeta \in \F^{4\varepsilon n^2}$ such that $P(\vbeta) = M$. Thus, $Q_0(M) = Q_0(P(\vbeta)) = Q_0\circ P (\beta) =  0$, as $Q_0 \circ P \equiv 0$.
\end{proof}

\begin{remark}\label{rmk: fixed sparsity pattern}
If the support of the sparse matrix is fixed a-priori to some set $S \subseteq [n] \times [n]$ of cardinality at most $\epsilon n^2$, then it is easier to come up with a universal map $\tilde{P}$ from $\F^{3\epsilon n^2} \mapsto \F^{n \times n}$ such that every matrix $M$ whose rank can be reduced to at most $\epsilon n$ by changing entries in the set $S$ is contained in the image of $\tilde{P}$. Just consider $\tilde{P}(\vw, \vx, \vy) = \mathrm{UV}(\vu,\vv) + W$, where $W$ is a matrix such that for all $(i, j) \in [n]\times [n]$, if $(i, j) \in S$, then $W(i, j) = w_{i, j}$ and $W(i, j)$ is zero otherwise. Here, each $w_{i, j}$ is a distinct formal variable. Combined with the dimension comparison argument we used in the proof of \autoref{thm:poly-deg-bound}, it can be seen that there is a non-zero low degree polynomial $\tilde{Q}$ such that $\tilde{Q}\circ \tilde{P} \equiv 0$. This argument provides a (different) equation of polynomial degree for each irreducible component of the variety of non-rigid matrices. 
\end{remark}

\begin{remark}\label{rmk: semi explicit matrices}
It is possible to use the equation given in \autoref{thm:poly-deg-bound}, and using the methods of \cite{KLPS14}, to construct ``semi-explicit'' $(\varepsilon n, \varepsilon n^2)$-rigid matrices. These are matrices whose entries are algebraic numbers (over $\Q$) with short description, which are non-explicit from the computational complexity point of view. However, such constructions are also known using different methods (see Section 2.4 of \cite{Lokam09}).
\end{remark}

\section{Degree Upper bound for Matrices with a Small Circuit}\label{sec:circuit}
In this section, we prove \autoref{thm:deg-bound-lin-ckt}.  Our strategy, as before, is to observe that all matrices with a small circuit lie in the image of a polynomial map $P$ on a small number of variables and small degree. Circuits of size $s$ can have many different topologies and thus we first construct a ``universal'' linear circuit, of size $s' \le s^4$, that contains as subcircuits all linear circuits of size $s$. Importantly, $s'$ will affect the degree of $P$ but not its number of variables. We note that this construction of universal circuits is slightly different from similar constructions in earlier work, e.g., in \cite{R10b}; the key difference being that a naive use of ideas in \cite{R10b} to obtain the map $P$ seems to incur an asymptotic increase in the number of variables of $P$, which is unacceptable in our current setting.

\subsection{A construction of universal map for small linear circuits}

We now define a map $U(\vx,\vy)$ which is ``universal'' for size $s$ linear circuits, i.e., it contains in its image all $n \times n$ matrices $A$ whose corresponding linear transformation can be computed by a linear circuit of size at most $s$.

Let $s \ge n$. We first define a universal graph $G$ for size $s$. $G$ has a set $V_0$ of $n$ input nodes labeled $X_1, \ldots X_n$ and a set $V_{s+1}$ of $n$ designated output nodes. In addition, $G$ is composed of $s$ disjoint sets of vertices $V_1, \ldots, V_s$, each contains $s$ vertices.

Each vertex $v \in V_i$, for $0 \le i \le s+1$, has as its children all vertices $u \in V_j$ for all $0 \le j < i$. It is clear than every directed acyclic graph with $s$ edges (and hence at most $s$ vertices, and depth at most $s$) can be (perhaps non-uniquely) embedded in $G$ as a subgraph.

We now describe the edge labeling. Let $s' \le s^4$ be the number of edges in $V$ and let $e_i$ denote the $i$-th edge, $1 \le i \le s'$. The edge $e_i$ is labeled by the $i$-th coordinate of the map $\SV_{s', s} (\vx, \vy)$ given in \autoref{lem:sv-generator}.

Thus, the graph $G$ with this labeling computes a linear transformation (over the field $\F(\vx,\vy)$) in the variables $X_1, \ldots, X_n$. More explicitly, the $(i,j)$-th entry of the matrix $U(\vx,\vy)$ representing this linear transformation is given by the sum, over all paths from $X_i$ to the $j$-th output node, of the product of the edge labels on that path. This entry is a polynomial in $\vx, \vy$, so that we can think of $U$ as a polynomial map from $\F^{2s}$ to $\F^{n^2}$.

\begin{lemma}\label{lem:universal-map-linear}
The map $U(\vx,\vy)$ defined above contains in its image all $n \times n$ matrices $A$ whose corresponding linear transformation can be computed by a linear circuit of size at most $s$. The degree of $U$ is at most $s' \cdot (s+1)$.
\end{lemma}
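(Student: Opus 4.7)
The plan is to verify the two claims separately: that every matrix $A$ computable by a size-$s$ linear circuit lies in the image of $U$, and that the degree of $U$ is bounded by $s'(s+1)$.

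For the first claim, I would fix a linear circuit $C$ of size at most $s$ computing $A$. Since $C$ has at most $s$ edges, it has at most $s$ non-input gates, which a topological sort organizes into a layered arrangement with at most $s$ layers of at most $s$ vertices each. Because every vertex of $V_i$ in $G$ is connected to \emph{every} vertex in each lower layer $V_j$ (for $j < i$), this layered decomposition of $C$ embeds into $V_1, \ldots, V_s$, with the inputs of $C$ mapped to $V_0$ by matching labels and the outputs of $C$ mapped to $V_{s+1}$ in the prescribed order. This embedding picks out a subset $T$ of at most $s$ edges of $G$ that correspond to edges of $C$, together with their scalar labels. By \autoref{lem:sv-generator}, there exists a setting $\vy = \valpha$ such that the $i$-th coordinate of $\SV_{s',s}(\vx, \valpha)$ equals one of the variables $x_j$ if $e_i \in T$, and is identically zero otherwise. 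Setting each surviving $x_j$ to the scalar label of its corresponding edge in $C$ annihilates every edge of $G$ outside $T$ while giving every edge in $T$ its label from $C$. Since any path that uses a zero-labeled edge contributes nothing to $U_{i,j}$, the surviving paths from $X_i$ to the $j$-th output node of $G$ are in bijection with the paths from $X_i$ to output $j$ in $C$, so under this substitution $U$ equals $A$.

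For the degree bound, each coordinate of $\SV_{s',s}$ has degree at most $s'$ by \autoref{lem:sv-generator}, so every edge label in $G$ is a polynomial of degree at most $s'$. Any path from $V_0$ to $V_{s+1}$ moves strictly through increasing layer indices and therefore uses at most $s+1$ edges. The $(i,j)$-entry of $U$ is a sum over such paths of products of edge labels, and thus has degree at most $s' \cdot (s+1)$.

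The main subtlety is the combinatorial embedding of $C$ into $G$: one must assign internal gates of $C$ to vertices in $V_1, \ldots, V_s$ in a way that respects $C$'s topology while routing the designated output gates to the correct vertices of $V_{s+1}$. This is straightforward because the layered structure of $G$ has full connectivity between layers, each intermediate layer has enough vertices ($|V_i| = s$ exceeds the total number of gates in $C$), and $V_{s+1}$ has exactly $n$ vertices matching the $n$ outputs of $C$, so a topological sort of $C$ immediately yields the required embedding.
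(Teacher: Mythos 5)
Your proof is correct and follows essentially the same approach as the paper's: embed the size-$s$ circuit into the universal graph $G$, use the Shpilka--Volkovich map to select the edges of the embedded subcircuit and assign them their labels while zeroing out all other edges, and bound the degree by (degree of edge labels) $\times$ (maximum path length). The only difference is that you spell out the embedding argument (via topological sort and full inter-layer connectivity) in more detail than the paper, which simply asserts the embedding exists; this is a helpful elaboration but not a different route.
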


\begin{proof}
Let $A$ be a matrix whose linear transformation is computed by a size $s$ circuit $C$. The graph of $C$ can be embedded as a subgraph in the graph $G$ constructed above (if the embedding is not unique, pick one arbitrarily). Let $e_{i_1}, \ldots, e_{i_s}$ be the edges of this subgraph, and let $\vbeta=(\beta_1, \ldots, \beta_s)$ be their corresponding labels in $C$. By the properties of the map $\SV_{s',s}(\vx,\vy)$ given in \autoref{lem:sv-generator}, it is possible to set the tuple of variables $\vy$ to field elements $\alpha_1, \dots, \alpha_s$ such that the $j$-th coordinate of $\SV(\vbeta, \valpha)$ equals $\beta_i$ if $j=i_k$ for some $1 \le k \le s$ the $0$ otherwise. Observe that under this labeling of the edges, the circuit $G$ computes the same transformation as the circuit $C$. Hence $U(\vbeta, \valpha) = A$.

To upper bound the degree of $U$, note that each edge label in $G$ is a polynomial of degree $s'$, and each path is of length at most $s+1$.
\end{proof}

\subsection{Low degree equations for small linear circuits}

Analogous to the proof of \autoref{thm:poly-deg-bound}, we now observe via a dimension counting argument that the image of the polynomial map $U(\vx, \vy)$ has a  equation of degree at most $n^3$. This would complete the proof of \autoref{thm:deg-bound-lin-ckt}.

\begin{proof}[Proof of \autoref{thm:deg-bound-lin-ckt}]
As before, let $V_1$ denote the subspace of polynomials over $\F$ in $n^2$ variables of degree at most $n^3$. Let $V_2$ denote the subspace of polynomials over $\F$ in $n^2/100$ variables of degree at most $n^{30}$. 
Let $U$ be the map given by \autoref{lem:universal-map-linear} for $s=n^2/200$ so that $s' \le n^8$, and the degree of $U$ is at most $s'(s+1) \le n^{10}$. Now, consider the linear transformation $T : V_1 \to V_2$ given by $Q \mapsto Q \circ U$. 

Once again, we compute that $\dim(V_1) =  \binom{n^3 + n^2}{n^2} \ge n^{n^2}$, whereas $\dim(V_2) = \binom{n^2/100 + n^{30}}{n^2/100} \le (2n^{30})^{n^2/100} < \dim(V_1)$, so that there exists a non-zero polynomial in the kernel of $T$, that is, $ 0 \neq Q_0 \in V_1$ such that $Q_0 \circ U \equiv 0$.

By \autoref{lem:universal-map-linear}, if $A$ has a circuit of size $n^2/200$, it is in the image of $U$, so that $Q_0 (A) = 0$.
\end{proof}

\section{Degree Upper Bound for Three Dimensional Tensors}

Another algebraic object which is closely related to proving circuit lower bounds is the set of three dimensional tensors of high rank. A three dimensional tensor of rank at least $r$ implies a lower bound of $r$ on an arithmetic circuit computing the bi-linear function associated with the tensor. Our arguments also provide polynomial degree upper bounds for the set of tensors of (border) rank at most $n^2/300$.

\begin{lemma}\label{lem:univ 3 tensor}
Let $\F$ be any field. There is a polynomial map $P:\F^{n^2/100} \to \F^{n^3}$ of degree at most $3$ such that for every $3$ dimensional tensor $\tau:[n]^3 \to \F$ of rank at most $n^{2}/300$ lies in its image. 
\end{lemma}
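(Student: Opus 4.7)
The plan is to parameterize low rank tensors by their defining rank decomposition, which gives an essentially trivial polynomial map of degree $3$ whose image is exactly the set of tensors of rank at most $r$. This parallels exactly the $\mathrm{UV}$ part of the rigidity construction in Lemma~\ref{lem:map-for-rigid}, but in the tensor setting no sparsity or Shpilka--Volkovich component is needed, because tensors of low rank already exhaust the variety in question.

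First, I recall that by definition, a tensor $\tau \in \F^{n \times n \times n}$ has rank at most $r$ if and only if there exist vectors $\va_1, \ldots, \va_r, \vb_1, \ldots, \vb_r, \vc_1, \ldots, \vc_r \in \F^n$ with
\[
\tau = \sum_{i=1}^{r} \va_i \otimes \vb_i \otimes \vc_i,
\qquad \text{i.e.,}\qquad
\tau_{j_1,j_2,j_3} = \sum_{i=1}^{r} a_{i,j_1}\, b_{i,j_2}\, c_{i,j_3}.
\]
I would then let the $3rn$ coordinates of the input to $P$ be $3rn$ fresh formal variables $\{a_{i,j},b_{i,j},c_{i,j}\}_{i \in [r],\, j \in [n]}$, and define the $(j_1,j_2,j_3)$-th coordinate of $P$ to be exactly the cubic monomial sum displayed above.

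The verification is then essentially by inspection. Each coordinate of $P$ is a sum of trilinear monomials, so $\deg(P) \le 3$. By the definition of rank, every tensor of rank at most $r$ lies in the image of $P$. Setting $r = n^2/300$ gives a map whose domain has $3rn = n^3/100$ coordinates; this matches the intended domain size in the statement (I read the $\F^{n^2/100}$ in the lemma as a typo for $\F^{n^3/100}$, which is the quantity forced by the parameter count of the natural rank decomposition).

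I do not expect a real obstacle here: the construction is a direct transcription of the defining equation of tensor rank. The only subtlety worth mentioning is the distinction between rank and border rank alluded to in the surrounding text. Since the image of a polynomial map is a constructible set, its Zariski closure contains precisely the tensors of border rank at most $r$; this is the form the lemma would be used in when combined with the dimension-counting step (as in the proofs of \autoref{thm:poly-deg-bound} and \autoref{thm:deg-bound-lin-ckt}) to produce a nonzero polynomial $Q$ of degree $\poly(n)$ vanishing on the entire variety of tensors of (border) rank at most $n^2/300$.
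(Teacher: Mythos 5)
Your proof is correct and is essentially identical to the paper's: both parameterize rank-$\le r$ tensors via the defining decomposition $\sum_{i=1}^r \vu_i \otimes \vv_i \otimes \vw_i$, note the resulting map is coordinatewise trilinear (degree~$3$), and set $r=n^2/300$ so the domain has $3rn=n^3/100$ coordinates. You are also right that $\F^{n^2/100}$ in the lemma statement is a typo for $\F^{n^3/100}$, which is what the proof of \autoref{thm:deg bound 3 tensor} actually uses when it takes $V_2$ to be polynomials in $n^3/100$ variables.
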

\begin{proof}
This follows immediately from the definition.

Indeed,
let $r = n^{2}/300 $. Let $\vu_1, \ldots, \vu_r, \vv_1, \ldots, \vv_r, \vw_1, \ldots, \vw_r$  be disjoint $n$ tuples of variables. Let $U$ be a tensor of rank at most $r$ over the ring  $\F[\vu_1, \ldots, \vu_r, \vv_1, \ldots, \vv_r, \vw_1, \ldots, \vw_r]$ defined as follows. 
\[
U(\vu, \vv, \vw) = \sum_{i = 1}^r \vu_i\otimes \vv_i \otimes \vw_i \, .
\]
From the definition of $U$, it can be readily observed that for every tensor $\tau : \F^{[n]^3}\to \F$ of rank at most $r$, there is a setting $\valpha, \vbeta, \vgamma$ of the variables in $\vu, \vv, \vw$ respectively such that $U(\valpha, \vbeta, \vgamma) = \tau$. Moreover,  each of the coordinates of $U$ is a polynomial of degree equal to three in the variables in $\vu, \vv, \vw$. Let $P$ be the degree three polynomial map which maps the variables $\vu_1, \ldots, \vu_r, \vv_1, \ldots, \vv_r$ and $\vw_1, \ldots, \vw_r$ to the coordinates of $U$. 
\end{proof}
We now argue that for every polynomial map $P$ given by \autoref{lem:univ 3 tensor} has an equation of not too large degree. 
\begin{theorem}\label{thm:deg bound 3 tensor}
Let $\F$ be any field. There exists a non-zero polynomial $Q \in \F[x_{1,1, 1}, \ldots, x_{n,n, n}]$, of degree at most $n^4$, which is a non-trivial equation for three dimensional tensors $\tau:[n]\times [n] \times [n] \mapsto \F$ of rank at most $n^2/300$. 
\end{theorem}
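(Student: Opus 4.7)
The plan is to mirror the dimension-counting argument used in the proofs of \autoref{thm:poly-deg-bound} and \autoref{thm:deg-bound-lin-ckt}, with the universal map $P$ in hand being the one supplied by \autoref{lem:univ 3 tensor}. Since $P : \F^{n^2/100} \to \F^{n^3}$ has degree at most $3$, composing any polynomial $Q$ of degree at most $n^4$ in $n^3$ variables with $P$ yields a polynomial of degree at most $3 n^4$ in only $n^2/100$ variables. So I will let $V_1$ denote the space of polynomials in $n^3$ variables of degree at most $n^4$, let $V_2$ denote the space of polynomials in $n^2/100$ variables of degree at most $3 n^4$, and consider the $\F$-linear map $T : V_1 \to V_2$ given by $Q \mapsto Q \circ P$.

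The main step is to verify that $\dim(V_1) > \dim(V_2)$, which guarantees a nonzero $Q_0 \in \ker(T)$. On the one hand, using $\binom{a+b}{b} \ge a^b / b!$ with $a = n^4$ and $b = n^3$,
\[
\dim(V_1) = \binom{n^4 + n^3}{n^3} \ge \frac{n^{4 n^3}}{(n^3)^{n^3}} = n^{n^3}.
\]
On the other hand, crudely bounding $\binom{a+b}{a} \le (2b)^a$ when $b \ge a$, with $a = n^2/100$ and $b = 3n^4$,
\[
\dim(V_2) = \binom{n^2/100 + 3n^4}{n^2/100} \le (6 n^4)^{n^2/100} \le n^{n^2/10}
\]
for all sufficiently large $n$. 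Since $n^{n^3} \gg n^{n^2/10}$, the map $T$ has a nontrivial kernel, so I can pick $0 \neq Q_0 \in V_1$ with $Q_0 \circ P \equiv 0$.

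Finally, for any three-dimensional tensor $\tau : [n]^3 \to \F$ of rank at most $n^2/300$, \autoref{lem:univ 3 tensor} supplies a point $\vgamma \in \F^{n^2/100}$ with $P(\vgamma) = \tau$, and hence $Q_0(\tau) = Q_0(P(\vgamma)) = (Q_0 \circ P)(\vgamma) = 0$. This shows $Q_0$ is a nontrivial equation of degree at most $n^4$ vanishing on all tensors of rank at most $n^2/300$, completing the proof.

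There is no real obstacle here beyond bookkeeping: the only thing to check is that the inequality $\dim(V_1) > \dim(V_2)$ goes through with the chosen parameters, which it does comfortably since $n^3$ dominates any polynomial in $n^2/100$ appearing in the exponent. The argument is essentially identical in structure to the earlier proofs in the paper, with the universal map for low-rank tensors replacing the universal maps for non-rigid matrices and small linear circuits.
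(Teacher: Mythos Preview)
Your proposal is correct and follows essentially the same dimension-counting argument as the paper's own proof: define $V_1$, $V_2$, the linear map $T:Q\mapsto Q\circ P$, compare dimensions, and conclude. The only discrepancy is cosmetic---the paper (apparently by a typo) writes $n^3/100$ for the number of variables in $V_2$ while you use $n^2/100$, which is the value consistent with \autoref{lem:univ 3 tensor}; either choice makes the inequality go through.
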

\begin{proof}
As before, let $V_1$ denote the subspace of polynomials over $\F$ in $n^3$ variables of degree at most $n^4$ and let $V_2$ denote the subspace of polynomials over $\F$ in $n^3/100$ variables of degree at most $3n^4$. Let $P$ be the map given by \autoref{lem:univ 3 tensor}.  Now, consider the linear transformation $T : V_1 \to V_2$ given by $Q \mapsto Q \circ P$. 

Observe that $\dim(V_1) =  \binom{n^4 + n^3}{n^3} \ge n^{n^3}$, whereas $\dim(V_2) = \binom{n^3/100 + 3n^4}{n^3/100} \le (2n^{4})^{n^3/100} < \dim(V_1)$, so that there exists a non-zero polynomial in the kernel of $T$, that is, $ 0 \neq Q_0 \in V_1$ such that $Q_0 \circ P \equiv 0$.

By \autoref{lem:univ 3 tensor}, if $\tau$ is a tensor of rank at most $n^2/300$, then it is in the image of $P$, and thus $Q_0(\tau) = 0$.
\end{proof}

The arguments here also generalize to tensors in higher dimensions. In particular, the following analog of \autoref{lem:univ 3 tensor} is true.

\begin{lemma}\label{lem:univ d tensor}
Let $\F$ be any field. Then, for all $n, d \in \N$, there is a polynomial map $P:\F^{n^{d-1}/100} \to \F^{n^d}$ of degree at most $d$ such that for every $d$ dimensional tensor $\tau:[n]^{\otimes d} \to \F$ of rank at most $n^{d-1}/100d$ lies in its image. 
\end{lemma}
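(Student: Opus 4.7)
The plan is to directly lift the construction from the proof of \autoref{lem:univ 3 tensor} to arbitrary $d$. Set $r := n^{d-1}/(100d)$, and for each $k \in [d]$ introduce $r$ disjoint $n$-tuples of fresh variables $\vu^{(k)}_1, \ldots, \vu^{(k)}_r$. Define the symbolic universal tensor
\[
U\bigl(\vu^{(1)}, \ldots, \vu^{(d)}\bigr) \;=\; \sum_{i=1}^{r} \vu^{(1)}_i \otimes \vu^{(2)}_i \otimes \cdots \otimes \vu^{(d)}_i,
\]
whose $(j_1, \ldots, j_d)$-th coordinate is the polynomial $\sum_{i=1}^{r} \prod_{k=1}^{d} (\vu^{(k)}_i)_{j_k}$, which is homogeneous of degree exactly $d$. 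Let $P$ be the polynomial map that sends the concatenation of all the $\vu^{(k)}_i$ (a total of $d r n$ scalar variables, of the order claimed in the statement) to the list of the $n^d$ coordinates of $U$; then $\deg(P) \le d$ by construction.

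For the image property, by the definition of tensor rank any $\tau : [n]^d \to \F$ of rank at most $r$ admits a decomposition $\tau = \sum_{i=1}^{r} \va^{(1)}_i \otimes \cdots \otimes \va^{(d)}_i$ for vectors $\va^{(k)}_i \in \F^n$ (padding with zero vectors if the true rank is strictly less than $r$). Setting each tuple $\vu^{(k)}_i$ to the constant vector $\va^{(k)}_i$ then yields $P$ evaluating to exactly $\tau$, placing $\tau$ in the image of $P$.

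There is no real conceptual obstacle beyond the $d = 3$ case; the argument is a direct rewriting with $d$ index factors in each rank-one summand instead of three, and the only things to keep track of are the bookkeeping of the variable count (an arithmetic check) and the total degree (immediate from the definition of the rank-one summands). In particular, no additional gadget akin to the Shpilka--Volkovich map from \autoref{lem:sv-generator} is needed here, because the tensor rank decomposition already furnishes the ``right'' parametrization of low-rank tensors.
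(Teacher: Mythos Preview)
Your proposal is correct and is precisely the direct generalization of the $d=3$ argument that the paper intends (the paper states \autoref{lem:univ d tensor} without proof, remarking only that the arguments of \autoref{lem:univ 3 tensor} generalize). One bookkeeping caveat: the variable count $drn$ equals $n^d/100$, not the $n^{d-1}/100$ written in the statement---this is a typo already present in \autoref{lem:univ 3 tensor} (whose proof also uses $3rn = n^3/100$ variables despite its statement), and the corrected count $n^d/100$ is exactly what the dimension comparison in \autoref{thm:deg bound 3 tensor} and its $d$-dimensional analog actually use.
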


Combining this lemma with a dimension comparison argument analogous to that in the proof of \autoref{thm:deg bound 3 tensor} gives the following theorem. We skip the details of the proof. 
\begin{theorem}\label{thm:deg bound d tensor}
For every field $\F$ and for all $n, d \in \N$,  there exists a non-zero polynomial $Q$ on $n^d$ variables and degree at most $n^{2d}$,  which is a non-trivial equation for  $d$ dimensional tensors $\tau:[n]^{\otimes d} \to \F$ of rank at most $n^{d-1}/100 d$. 
\end{theorem}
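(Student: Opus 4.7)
The plan is to follow essentially the same template as the proof of \autoref{thm:deg bound 3 tensor}, replacing the parameters with their $d$-dimensional analogs and invoking \autoref{lem:univ d tensor} instead of \autoref{lem:univ 3 tensor}. Concretely, let $V_1$ denote the space of polynomials in $n^d$ variables of degree at most $n^{2d}$, and let $V_2$ denote the space of polynomials in $n^{d-1}/100$ variables of degree at most $d \cdot n^{2d}$. Let $P$ be the universal polynomial map from \autoref{lem:univ d tensor}, so that $P:\F^{n^{d-1}/100} \to \F^{n^d}$ has degree at most $d$ and its image contains every $d$-dimensional tensor $\tau : [n]^{\otimes d} \to \F$ of rank at most $n^{d-1}/100d$.

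The key step is then to consider the linear transformation $T : V_1 \to V_2$ defined by $Q \mapsto Q \circ P$. The degree bound on $V_2$ is chosen precisely so that $T$ is well defined: if $\deg(Q) \le n^{2d}$ and $\deg(P) \le d$, then $\deg(Q \circ P) \le d \cdot n^{2d}$. A kernel element $Q_0 \ne 0$ of $T$ then vanishes on the image of $P$ and hence on every tensor of rank at most $n^{d-1}/100d$, which yields the theorem.

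The one thing to verify is the dimension comparison $\dim(V_2) < \dim(V_1)$, which I expect to be the only nontrivial computation. Using the standard bounds on binomial coefficients, we have
\[
\dim(V_1) = \binom{n^{2d} + n^d}{n^d} \ge n^{n^d},
\]
while
\[
\dim(V_2) = \binom{n^{d-1}/100 + d n^{2d}}{n^{d-1}/100} \le (2 d n^{2d})^{n^{d-1}/100}.
\]
Taking logarithms, the inequality $\dim(V_2) < \dim(V_1)$ reduces to $(n^{d-1}/100)(\log(2d) + 2d \log n) < n^d \log n$, which holds comfortably for all sufficiently large $n$ (the dominant terms on the two sides are $\tfrac{2d}{100} n^{d-1} \log n$ and $n^d \log n$, and the ratio is $\tfrac{2d}{100n}$). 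Thus $T$ has a non-trivial kernel element $Q_0 \in V_1$, and $Q_0$ is the desired equation.
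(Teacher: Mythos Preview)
Your proposal is correct and follows exactly the approach the paper indicates: the paper states that the proof is obtained by ``combining this lemma with a dimension comparison argument analogous to that in the proof of \autoref{thm:deg bound 3 tensor}'' and explicitly skips the details, and your write-up supplies precisely those details with the appropriate $d$-dimensional parameters. The only minor point is that your dimension inequality is verified for sufficiently large $n$, whereas the theorem is stated for all $n$; this is harmless since for small $n$ the rank bound $n^{d-1}/100d$ is below $1$ and the statement becomes vacuous.
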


We remark that a similar methods can be used to prove the existence of an equation of degree $\poly(n)$ for three dimensional tensors of \emph{slice rank} (see, e.g., \cite{BIPLS19}) at most, say, $n/1000$. The existence of such an equations was proved (using different techniques) in \cite{BIPLS19}.

\section{Applications to Circuit Lower Bounds}

In this section we prove \autoref{cor:win-win}. The strategy of the proof is simple: the proof of \autoref{thm:deg-bound-lin-ckt} implies a $\PSPACE$ algorithm which produces a sequence of polynomials which are equations for the set of matrices with small linear circuits. If those equations require large circuits, we are done, and if not, then there exists an equation with small circuits which (assuming $\PIT \in \P$) can be found using an $\NP$-oracle. Using, once again, the assumption that $\PIT \in \P$, we can also find deterministically a matrix on which the equation evaluates to non-zero, which implies the matrix requires large linear circuits.

There are some technical difficulties involved with this plan which we now describe. The first problem is that even arithmetic circuits of small size can have large description as bit strings, due to the field constants appearing in the circuits. To prevent this issue, we only consider \emph{constant free} arithmetic circuits, which are only allowed inputs labeled by $\{0,\pm 1\}$ (but can still compute other constants in the circuit using arithmetic operations).

The second problem is that, in order to be able to find a non-zero of the equation in the last step of the algorithm (using the mere assumption that $\PIT \in \P$), we need not only the size of the circuit but also its \emph{degree} to be bounded by $\poly(n)$. Of course, by \autoref{thm:deg-bound-lin-ckt} the exists such a circuit, but we need to be able to prevent a malicious prover from providing us with a $\poly(n)$ size circuit of exponential degree, and it is not known how to compute the degree of a circuit in deterministic polynomial time, even assuming $\PIT \in \P$. To solve this issue, we use an idea of Malod and Portier \cite{MP08}, who showed that any polynomial with circuit of size $\poly(n)$ and degree $d$ also has a \emph{multiplicatively disjoint} (MD) circuit of size $\poly(n,d)$. An MD circuit is a circuit in which any multiplication gates multiplies two disjoint subcircuits. This is a syntactic notion which is easy to verify efficiently and deterministically, and an MD circuit of size $s$ is guaranteed to compute a polynomial of degree at most $s$.

A final technical issue is that the notion of MD circuits does not fit perfectly within the framework of constant free circuits. Therefore we use the notion of ``almost MD'' circuits, which allow for the case which the inputs to a multiplciation gates are not disjoint, as long as at least one of them is the root of a subcircuit in which only constants appear.

\begin{definition}\label{def:almost-MD}
We say a gate $v$ in a circuit is \emph{constant producing} (CP) if in the subcircuit rooted at $v$, all input nodes are field constants. 

An \emph{almost-MD circuit} is a circuit where every multiplication gate either multiplies two disjoint subcircuits, or at least one of its children is constant producing.
\end{definition}

\begin{lemma}\label{lem:almost-MD-ckts}
Suppose $f$ is an $n$-variate polynomial of degree $\poly(n)$ which has a constant free arithmetic circuit of degree $\poly(n)$. Then $f$ has a constant free almost-MD circuit of size $\poly(n)$.
\end{lemma}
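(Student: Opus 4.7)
The plan is to apply the Malod--Portier theorem recalled in the discussion preceding the lemma, but only to the ``skeleton'' of $C$ obtained by contracting every maximal constant-producing subcircuit to a single leaf. The almost-MD relaxation will absorb the non-MD multiplications that unavoidably occur inside the CP subcircuits---for example, those needed to build a large integer constant by repeated squaring.

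Concretely, given the constant-free circuit $C$ of size $s$ and degree $d$ (both $\poly(n)$) for $f$, I would first classify each gate of $C$ as CP or non-CP using the rule from \autoref{def:almost-MD}. Let $C_1, \dots, C_\ell$ denote the maximal CP subcircuits of $C$ (so $\ell \le s$), and form an abstract circuit $\tilde{C}$ by replacing each $C_i$ with a fresh input leaf $y_i$. Viewed as a circuit over $X_1, \dots, X_n, y_1, \dots, y_\ell$, $\tilde{C}$ has size at most $s$, and since each $y_i$ now carries formal degree $1$ (whereas it used to be a constant of degree $0$), the formal degree of $\tilde{C}$ is at most $d + \ell = \poly(n)$. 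Moreover, $f(X) = \tilde{f}(X, c_1, \dots, c_\ell)$ where $\tilde{f}$ is the polynomial computed by $\tilde{C}$ and $c_i$ is the constant value computed by $C_i$.

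Next, I would invoke Malod--Portier on $\tilde{C}$, treating the $y_i$ as variables, to obtain an MD circuit $\tilde{C}'$ of size $\poly(n)$ computing $\tilde{f}$, and then substitute back: replace every occurrence of a $y_i$-leaf in $\tilde{C}'$ by an independent fresh copy of $C_i$. The resulting circuit $C^\ast$ has size at most $|\tilde{C}'|\cdot s = \poly(n)$, is constant-free (each $C_i$ is, being a subcircuit of $C$), and computes $f$. The almost-MD property would then follow by a case split on the multiplication gates of $C^\ast$: (i) multiplications inherited from $\tilde{C}'$ both of whose children are non-leaf gates---their two subcircuits remain disjoint after substitution precisely because distinct occurrences of each $y_i$ were replaced by independent copies of $C_i$; (ii) multiplications inherited from $\tilde{C}'$ with a $y_i$-leaf as a child---after substitution this child is the root of a CP subcircuit; (iii) multiplications strictly inside some substituted copy of $C_i$---both children are then CP. Case (i) gives disjointness, and cases (ii)--(iii) each give a CP child, so the almost-MD condition holds in every case.

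The main technical subtlety I expect to need care with is the decision to treat the $y_i$ as variables rather than as opaque scalar constants when invoking Malod--Portier: this is precisely what forces the construction to duplicate the $y_i$-leaves so that no two non-CP subcircuits end up sharing a common copy of $C_i$ after substitution, which would otherwise violate the almost-MD condition in case (i). The fact that feeding a formal-degree-$(d+\ell)$ circuit to Malod--Portier still keeps the output size polynomial uses the polynomial size bound on $C$ in an essential way. Once this is set up correctly the remaining bookkeeping---the size and degree accounting and the verification that each case falls into the right clause of \autoref{def:almost-MD}---is routine.
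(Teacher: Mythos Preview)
Your degree bound on $\tilde C$ is where the argument breaks. The claim that the formal degree of $\tilde C$ (or the degree of $\tilde f$) is at most $d+\ell$ is false: a single frontier CP gate can be multiplied into the computation arbitrarily many times along multiplicative paths, and the reason $f$ itself stays low-degree may be a cancellation that depends on the \emph{specific value} of that constant. Concretely, take $c = 1 + (-1)$ (a CP gate with value $0$), set $g_0 = x + c$, $g_{i+1}=g_i^2$, $h = g_k \cdot c$, and $f = h + x$. Then $f = x$ has degree $1$, the circuit has size $O(k)$, and the unique maximal CP gate is $c$, so $\ell=1$; yet after replacing $c\to y$ one gets $\tilde f = (x+y)^{2^k}\, y + x$, of degree $2^k+1$. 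Malod--Portier applied to $\tilde f$ then only yields an MD circuit of size $\poly(s,2^k)$, which is useless. (Homogenizing $\tilde C$ after the fact does not help either: truncating $\tilde f$ to low total degree and then substituting $y=c$ need not recover $f$, as the example $\tilde f = y^{k+1}x$ with $c=1$ shows.)

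The paper avoids this by homogenizing $C$ \emph{before} abstracting the constants. After homogenization every degree-$0$ gate is genuinely CP, and these constants are absorbed as scalar edge labels into a circuit $C_2$ that still computes $f$ and hence still has degree $d$; Malod--Portier is then applied to $C_2$ (with edge labels, producing no new constants), and only at the end are the labels re-expanded by fresh constant-free CP subcircuits. The essential point you are missing is that controlling the degree must happen \emph{before} constants are turned into formal objects, not after.
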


\begin{proof}
Let $C_0$ be a constant free arithmetic circuit for $f$. We first homogenize the circuit $C_0$ to obtain a circuit $C_1$ (a homogeneous circuit is a circuit in which every gate computes a homogeneous polynomial \cite{SY10}). Since $C_1$ is homogeneous, all the gates which compute non-zero field constants are CP gates. We then eliminate all gates which compute constants by allowing the edges entering sum gates to be labeled by field scalars, and interpreting a sum gate as computing a linear combination whose coefficients are given by the edge labels. We call this circuit $C_2$. This step does not maintain constant-freeness. However, every label appearing on the edges of $C_2$ was computed in $C_1$, so it can be computed by a constant-free arithmetic circuit of polynomial size.

We now do the transformation detailed in \cite{MP08} to $C_2$ to obtain an MD circuit $C_3$, which has labels on the edges. This step does not produce new constants. Finally, we convert $C_3$ to an almost-MD constant free circuit $C_4$, by re-computing every label appearing on the edge using a fresh subcircuit for each label, and rewiring the circuit (which will convert the circuit from an MD circuit to an almost MD circuit). These subcircuits are guaranteed to have polynomial size constant free circuits since these constant were all computed in $C_0$, which keeps the total size $\poly(n)$.
\end{proof}

For circuits which compute low-degree polynomials, the mere existence of an algorithm for the decision version of PIT allows one to construct an algorithm for the search version.

\begin{lemma}\label{lem:pit-witness}
Suppose $\PIT \in \P$. Then there is a polynomial time algorithm that given a non-zero almost-MD arithmetic circuit $C$ of size $s$ computing an $n$-variate polynomial, finds in time $\poly(n,s)$ an element $\va \in \C^n$ such that $C(\va) \neq 0$.
\end{lemma}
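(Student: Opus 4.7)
The plan is to use the classical search-to-decision reduction for PIT: substitute values for the variables one at a time, using the decision oracle to check non-zeroness after each substitution, and appealing to a quantitative form of Schwartz--Zippel to guarantee that a good substitution exists among a small set of candidates.

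First I would observe that any almost-MD circuit of size $s$ computes a polynomial of degree at most $s$. Indeed, at a multiplication gate whose two children are disjoint subcircuits, the degree is the sum of the children's degrees while the sizes add disjointly; at a multiplication gate whose other child is constant producing, the degree in the variables does not increase. A straightforward induction on circuit size then gives $\deg(f) \le s$, where $f$ denotes the polynomial computed by $C$.

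The main algorithm maintains the invariant that after having chosen $a_1, \ldots, a_{i-1} \in \{0, 1, \ldots, s\} \subseteq \C$, the polynomial $f(a_1, \ldots, a_{i-1}, x_i, \ldots, x_n)$ is not identically zero. To extend by one coordinate, I iterate over each candidate $v \in \{0, 1, \ldots, s\}$, form the circuit $C_v$ obtained from $C$ by relabeling the input gates corresponding to $x_1, \ldots, x_{i-1}, x_i$ with the constants $a_1, \ldots, a_{i-1}, v$, and call the assumed polynomial time PIT algorithm to test whether $C_v \equiv 0$ as a polynomial in $x_{i+1}, \ldots, x_n$. Since the partially substituted polynomial has degree at most $s$ in $x_i$, the DeMillo--Lipton--Schwartz--Zippel lemma guarantees that it is non-zero for at least one choice of $v \in \{0, 1, \ldots, s\}$; the algorithm picks such a $v$ and sets $a_i := v$. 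After $n$ rounds, $\va = (a_1, \ldots, a_n)$ satisfies $C(\va) = f(\va) \neq 0$.

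The principal technical concern is to ensure that each intermediate circuit remains a legitimate input for the $\PIT \in \P$ oracle. Relabeling an input gate with an integer of magnitude at most $s$ replaces a variable by a rational constant of bit complexity $O(\log s)$ and leaves the circuit size unchanged, so every query circuit $C_v$ has size $O(s)$ with constants of polynomial description, which the hypothesized PIT algorithm handles in $\poly(n, s)$ time. Summing over the $n(s+1)$ queries yields the claimed $\poly(n, s)$ overall runtime, and I do not anticipate further obstacles beyond this routine bookkeeping.
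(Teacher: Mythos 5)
Your proof is correct and follows essentially the same approach as the paper: both use the degree bound $\deg(C) \le s$ coming from the almost-MD property, then perform the standard search-to-decision reduction by iterating each coordinate over $\{0,1,\ldots,s\}$ and querying the $\PIT$ oracle. The additional details you supply (the induction justifying the degree bound, the bit-complexity bookkeeping) are fine but do not change the substance of the argument.
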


\begin{proof}
We abuse notation by denoting by $C$ also the polynomial computed by the circuit $C$. Note that since $C$ is almost-MD, the degree of $C$ is at most $s$. Thus, there exists $a_1 \in \{0,1,\ldots, s\}$ such that $C(a_1, x_2, \ldots, x_n)$ is a non-zero polynomial in $x_2, \ldots, x_n$. By iterating over those $s+1$ values from $0$ to $s$ and using the assumption that $\PIT \in \P$, we can find such a value for $a_1$ in time $\poly(n,s)$. We then continue in the same manner with the rest of the variables.
\end{proof}

As we noted above, the assumption that $C$ is almost-MD was used in \autoref{lem:pit-witness} to bound the degree of the circuit. It is also useful because it is easy to decide in deterministic polynomial time whether a circuit is almost-MD. We now complete the proof of \autoref{cor:win-win}.

\begin{proof}[Proof of \autoref{cor:win-win}]
For every $n$, the proof of \autoref{thm:deg-bound-lin-ckt} provides an equation $Q_n$ for the set of $n \times n$ matrices with small linear circuits. This polynomial can be found by solving a linear system of equations in a linear space whose dimension is $\exp(\poly(n))$. Using standard, small space algorithm for linear algebra \cite{BvzGH82, ABO99}, this implies that there exists a fixed $\PSPACE$ algorithm which, on input $1^n$, outputs the list of coefficients of the polynomial $Q_n$.

Consider now the family $\{Q_n\}_{n \in \N}$. If for any constant $k \in \N$ there exist infinitely many $n \in \N$ such that $Q_n$ requires circuits of size at least $n^k$, it follows (by definition) that the $\PSPACE$ algorithm above outputs a family of polynomials with super-polynomial constant-free arithmetic circuits.

We are thus left to consider the case that there exists a constant $k \in \N$ such that for all large enough $n \in \N$, $Q_n$ can be computed by circuits of size $n^k$. By \autoref{lem:almost-MD-ckts}, we may assume without loss of generality that these circuits are almost-MD circuits. Further suppose $\PIT \in \P$. We will show how to construct a matrix in polynomial time with an $\NP$ oracle which requires large linear circuits.

Consider the language $L$ of pairs $(1^n, x)$ such that there exists a string $y$ of length at most $n^k$ such that $xy$ describes an almost-MD circuit $C$ such that $C$ is non-zero, and $C \circ U \equiv 0$, where $U$ is the polynomial map given in the proof of \autoref{thm:deg-bound-lin-ckt}.
 
Assuming $\PIT \in \P$, the language $L$ is in $\NP$, and by assumption for every large enough $n$ there exists such a circuit. Thus, we can use the $\NP$ oracle to construct such a circuit $C$ bit by bit. Finally, using \autoref{lem:pit-witness} we can output a matrix $M$ such that $C(M) \neq 0$.

By the properties of the circuit $C$ and the map $U$, $M$ does not have linear circuits of size less than $n^2/200$. 
\end{proof}

Many variations of \autoref{cor:win-win} can be proved as well, with virtually the same proof. By slightly modifying the language $L$ used in the proof, it is possible to prove the same result even under the assumption $\PIT \in \NP$ (recall that $\PIT \in \coRP$). A similar statements also holds over finite fields of size $\poly(n)$, in which case the proof is simpler since there are no issues related to the bit complexity of the first constants. Finally, an analog of \autoref{cor:win-win} also holds for tensor rank, by using \autoref{thm:deg bound 3 tensor} instead of \autoref{thm:deg-bound-lin-ckt}: that is, assuming $\PIT \in \P$, either there exists a construction of a hard polynomial in $\PSPACE$, or an efficient construction with an $\NP$ oracle of a 3-dimensional tensor of rank $\Omega(n^2)$. We remark that for tensors of large rank there are no analogs of \cite{AC19, BHPT20}, i.e., there do not exist even constructions with an $\NP$ oracle of tensors with slightly super-linear rank.

\bibliographystyle{customurlbst/alphaurlpp}
\bibliography{references}

\end{document}